\documentclass[11pt,reqno]{amsart}

\usepackage[T1]{fontenc}
\usepackage{lmodern}
\usepackage[margin=1in]{geometry}
\usepackage{amsmath,amssymb,mathtools}
\usepackage{microtype}
\usepackage{enumitem}
\usepackage{xcolor}
\usepackage[colorlinks=true,linkcolor=blue!45!black,
  citecolor=blue!45!black,urlcolor=blue!45!black]{hyperref}
\usepackage[nameinlink,capitalise,noabbrev]{cleveref}

\newtheorem{theorem}{Theorem}[section]
\newtheorem{lemma}[theorem]{Lemma}
\newtheorem{proposition}[theorem]{Proposition}

\theoremstyle{definition}
\newtheorem{definition}[theorem]{Definition}
\theoremstyle{remark}

\newcommand{\E}{\mathbb E}
\newcommand{\Prb}{\mathbb P}
\newcommand{\ind}{\mathbf 1}
\newcommand{\Dir}{\mathcal E}
\newcommand{\TV}{\mathrm{TV}}
\newcommand{\KL}{D_{\mathrm{KL}}}
\newcommand{\norm}[1]{\left\lVert #1\right\rVert}
\newcommand{\pos}[1]{\left(#1\right)_+}
\newcommand{\Cost}{\mathsf{Cost}}
\newcommand{\OPT}{\mathsf{OPT}}

\numberwithin{equation}{section}
\setlist[enumerate]{leftmargin=*,itemsep=3pt,topsep=5pt}
\allowdisplaybreaks[1]

\title{Transposition achieves OPT$+O(1)$ in polynomial time for IID list update}
\author{Clayton Mizgerd}
\address{Department of Mathematics, Statistics, and Computer Science, University of Illinois Chicago, Chicago, IL 60607, USA}
\email{cmizge2@uic.edu}
\date{}

\begin{document}
\begin{abstract}
In the classical list update problem, a set of items must be stored in a list-type structure, where accessing the $i$-th element costs $i$.  Items will be queried in an IID manner according to some probability distribution $p$ on the items.  We want to minimize the expected cost of each query.  The optimal order is to place the items in decreasing order of probability $p_1 \geq p_2 \geq \cdots$ with expected cost $\mathsf{OPT} = \sum_j j p_j$, but the probability vector $p$ is generally unknown.  Thus we use a self-organizing list following the transposition rule: an item is transposed 1 position forward whenever it is queried.  Coester (2026) proved that, at stationarity measure for the transposition rule, the expected cost of a query is at most $\mathsf{OPT} + 1$.  However, this Markov chain may have arbitrarily slow mixing time.  We prove that, for arbitrary $p$ and arbitrary initial orderings $\sigma$, after polynomially many queries in the number of items, the expected cost of a query is at most $\mathsf{OPT} + O(1)$.
\end{abstract}
\maketitle

\section{Introduction}

The list update problem is a classical and well-studied problem in theoretical computer science.  The user has a set of $n$ items labeled $1,\ldots,n$ that they wish to store in a linked list.  At each time step, a query arrives for a specific item $i$, which will need to be fetched from the list.  Fetching an item in position $j$ costs $j$.  In this work, we will focus on the independent and identically distributed (IID) case: there is a fixed probability vector $p = (p_1,\ldots,p_n)$ on the items and each query samples item $I$ according to $\mathbb{P}[I=i] = p_i$.  Relabel the items so that
\[ 1 \geq p_1 \geq p_2 \geq \cdots \geq p_n \geq 0, \qquad \sum_{i=1}^n p_i = 1, \]
breaking ties arbitrarily.  Thus if the list order is $\sigma$, then the expected cost of a query is
\[ \Cost(\sigma) = \sum_{i=1}^n p_i \sigma(i). \]

It is not hard to see that the optimal ordering is the identity ordering $\sigma(i) = i$, ordering by decreasing probability $p_i$.  Thus let
\[ \mathsf{OPT} := \Cost(\operatorname{id}) = \sum_{i=1}^n i p_i. \]

While the IID list update model is reasonable, in applications, the user is generally unaware of the probability vector $p$ and so cannot access this order.
One natural solution is to record queries for an empirical estimate of $p$, but this adds overhead to the list structure which is often undesirable.  Thus another commonly-studied strategy is a dynamic rule.

This study dates back to the work of McCabe~\cite{mccabe1965serial}.  There are two common dynamic rules for self-organizing lists.  The first is called move-to-front, where when an item $i$ is queried, it is moved to the front of the list, preserving the relative order of the remaining items.

The second rule, which we will be studying, is the transposition rule.  Under the transposition rule, when an item $i$ is queried, it is moved one position forward (transposed with its front neighbor).  If the queried item is already in the front, then the list is not changed.

Rivest \cite{rivest1976selforganizing} established the stationary distribution, and proved that the expected cost at stationarity for the transposition rule is no larger than that of move-to-front for any probability vector $p$.  Rivest also conjectured that the transposition rule is optimal among rules depending only on the position of the queried item.  This was disproven by Anderson, Nash, and Weber~\cite{anderson1982counterexample} with a six-item example.  However, near-optimality was shown by a line of work~\cite{makjamroen1991self,gamarnik2005transposition,indyk2025optimal} culminating in the following result of Coester~\cite{coester2026transposition}.

\begin{theorem}[{\cite[Theorem 1]{coester2026transposition}}]
    Let $\mu$ denote the stationary measure for the transposition rule.  Then
    \[ \mathbb{E}_{\sigma \sim \mu}[ \Cost(\sigma) ] \leq \OPT + 1. \]
\end{theorem}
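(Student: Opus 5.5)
We use Rivest's explicit stationary distribution: for $p_i>0$ for all $i$ (the general case follows by a limiting argument, or by noting zero-probability items sink to the back), the transposition chain is reversible with
\[ \mu(\sigma) \propto \prod_{i=1}^n p_i^{-\sigma(i)}. \]
Reversibility is checked directly: swapping adjacent items $a$ (at position $k$) and $b$ (at $k+1$) multiplies the weight by $p_b/p_a$, matching the ratio of the transition rates $p_b$ and $p_a$.

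The plan is to reduce the expected cost to pairwise inversions. Write
\[ \Cost(\sigma)-\OPT=\sum_{i<j}(p_i-p_j)\,\ind[\sigma(j)<\sigma(i)], \]
which holds because $\sigma(i)=1+\#\{k:\sigma(k)<\sigma(i)\}$ and $\OPT$ corresponds to no inversions. It therefore suffices to show
\[ \sum_{i<j}(p_i-p_j)\,\mu\bigl(\sigma(j)<\sigma(i)\bigr)\le 1. \]
For a fixed pair $i<j$, consider the involution on orderings that swaps the positions of $i$ and $j$. If $j$ precedes $i$ with gap $d=\sigma(i)-\sigma(j)\ge1$, the swap changes the weight by the factor $(p_j/p_i)^{d}$. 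Hence
\[ \mu\bigl(\sigma(j)<\sigma(i)\bigr)\le \E\bigl[(p_j/p_i)^{D}\bigr]\cdot\mu(\text{$i$ before $j$})\le \frac{p_j}{p_i}, \]
but this crude bound only yields $\sum_{i<j} (p_i-p_j)p_j/p_i$, which is not $O(1)$.

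The main obstacle is therefore to exploit the gap $D$: inversions of a pair far apart in the list are exponentially suppressed. Coester's argument must organize the sum so that each item $j$ contributes $O(p_j)$ in aggregate. My first attempt would be a charging scheme indexed by the item $j$. Fix $j$ and an ordering where $j$ sits at position $m$. Compare with the ordering where $j$ is moved one step back past its successor $k$; the $\mu$-weights differ by the factor $p_k/p_j$. Summing the cost changes with these weights, I would aim to show
\[ \E_\mu[\sigma(j)]\le j+O\!\left(\tfrac{1}{p_j}\right)\cdot(\text{something summing to }1), \]
so that $\sum_j p_j\bigl(\E\sigma(j)-j\bigr)\le 1$. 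Concretely, I would compare $\E_\mu[\#\{i<j:\sigma(i)>\sigma(j)\}]$ to a geometric series in the ratios $p_i/p_j\ge1$. A cleaner route is the stationarity identity $\E_\mu[\text{flow across each adjacent position}]=0$. Applying it to the potential $\Phi(\sigma)=\sum_i p_i\sigma(i)$ gives
\[ 0=\E_\mu\Bigl[\sum_{i:\sigma(i)>1}p_i\bigl(p_{\sigma^{-1}(\sigma(i)-1)}-p_i\bigr)\Bigr], \]
a relation between adjacent pairs. I would then try to convert this adjacent-pair identity into a bound on all inversions, weighted by $p_i-p_j$. I expect this conversion, turning local balance into the global $+1$ bound, to be the hard step, and I would attempt it by induction on $n$, removing the least likely item $n$ and bounding its expected contribution $p_n\E[\sigma(n)]-np_n$ together with its effect on the others by $p_n$-sized terms.
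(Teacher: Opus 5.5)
Your reduction to bounding $\sum_{i<j}(p_i-p_j)\,\mu(\sigma(j)<\sigma(i))$ is correct, and so is the swap involution. In fact it gives the exact formula $\mu(\sigma(j)<\sigma(i))=\E_\mu\bigl[(p_j/p_i)^{D}\,\ind\{\sigma(i)<\sigma(j)\}\bigr]$ with $D=\sigma(j)-\sigma(i)$, hence $\mu(\sigma(j)<\sigma(i))\le p_j/(p_i+p_j)$. But after that the proposal is a list of intentions, and the step you flag as ``the hard step'' is the whole theorem.

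The missing ingredient is quantitative control of the gap $D$. You would need to show that for pairs whose log-probabilities are far apart relative to the local density of items, $D$ is typically large: the items of intermediate probability mostly lie between $i$ and $j$, so $(p_j/p_i)^D$ is tiny. That is a statement about the joint configuration of many items and cannot be extracted pair by pair.

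For orientation, the paper does not reprove this theorem; it quotes it from Coester. Its own machinery is the threshold/buffer decomposition of \cref{lem:separated-reduction} plus comparison with a biased exclusion process. Specialized to $\nu=\mu$, i.e.\ $f\equiv1$ in \cref{prop:quadratic}, it gives only $\mu(F)\le C_0\approx 400$ for $n\ge1024$. The sharp constant $1$ needs a finer argument still.

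Neither route you sketch closes the gap.

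\emph{The stationarity identity.} $\mu[(P-I)\Cost]=0$ is a single scalar consequence of stationarity and says essentially nothing about $F$. Take $p_i=2^{-i}$ for $i<n$, $p_n=2^{1-n}$, and let $\mathrm{rev}(i)=n+1-i$ be the reversed order. With $S=\sum_{i=1}^{n-2}4^{-i}$, one computes $(P-I)\Cost(\mathrm{id})=\tfrac14 S$ and $(P-I)\Cost(\mathrm{rev})=-\tfrac12 S$. So $\nu=\tfrac23\delta_{\mathrm{id}}+\tfrac13\delta_{\mathrm{rev}}$ satisfies the same identity, while $\nu(F)=\tfrac13(n-3+2^{2-n})$. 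Any proof must use the full product form of $\mu$ (detailed balance on every edge), not this one relation.

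\emph{The induction on $n$.} It does not close as stated.
\begin{enumerate}
\item The term $p_n\E\sigma(n)-np_n$ you propose to bound is trivially $\le0$. The real cost of item $n$ is the displacement term $\E_\mu\sum_{k\text{ behind }n}(p_k-p_n)$.
\item Under $\mu$, the law of the relative order $\tau$ of items $1,\dots,n-1$ is not the stationary law $\mu'$ of the $(n-1)$-item chain. It is the tilt $\mu'(\tau)w(\tau)/\mu'(w)$, with $w(\tau)=\sum_{m=1}^{n}\prod_{k:\tau(k)\ge m}(p_n/p_k)$.
\item After renormalizing, the inductive hypothesis reads $\E_{\mu'}F'\le1-p_n$, where $F'$ is the excess cost of items $1,\dots,n-1$ under their original weights. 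So you would need $\E_\mu\sum_{k\text{ behind }n}(p_k-p_n)+\bigl(\E_\mu F'(\tau)-\E_{\mu'}F'\bigr)\le p_n$.
\end{enumerate}
Your pairwise bound controls the first term only by $\sum_{k<n}(p_k-p_n)p_n/(p_k+p_n)$, which can be of order $np_n$. The second term requires a correlation inequality between the tilt $w$ and $F'$ that the proposal neither states nor proves.
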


While these guarantees hold at the stationary measure $\mu$, the mixing time cannot be bounded in terms of $n$ uniformly over all $p$.  Indeed, one may choose $p_i \propto 1/f(i)$ for arbitrarily fast-growing functions $f$ and make the spectral gap of the transition kernel $P$ arbitrarily small.
Under pathological distributions with very small tails, the failure of the tail elements of the list to mix is immaterial to controlling the cost.  Coester~\cite{coester2026transposition} conjectured that the transposition rule achieves expected cost $\OPT + O(1)$ after polynomially many requests under any $p$ and any initial ordering.  Our main result is to confirm this conjecture.

\begin{theorem}\label{thm:main}
    For every $n \geq 1$, every probability vector $p$ on $[n]$, every initial permutation $\sigma_0$, and every integer $t \geq n^{29}$,
    \[ \mathbb{E}[ \Cost(\sigma_t) ] \leq \OPT + 1250, \]
    where $\sigma_t$ is the permutation after $t$ steps.
\end{theorem}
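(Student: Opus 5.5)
We write the cost excess as a sum over inversions. For any ordering $\sigma$,
\[ \Cost(\sigma)-\OPT=\sum_{i<j}(p_i-p_j)\,\ind[\sigma(j)<\sigma(i)]. \]
So it suffices to show
\[ \sum_{i<j}(p_i-p_j)\,\Prb[\text{$j$ ahead of $i$ at time $t$}]\le 1250 . \]

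The plan has three parts: (i) truncate to a head of heavy items; (ii) get fast mixing on that head via a comparison with the stationary measure; (iii) transfer Coester's $\OPT+1$ bound for $\mu$ to time $t$.

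\textbf{Truncation.} Let $K$ be the largest index with $p_K\ge n^{-c}$ for a suitable constant $c$ (roughly $c\approx 6$; the exponent $29$ comes from balancing the choices below). Pairs with $i,j>K$ contribute at most
\[ \sum_{i<j,\ i>K}p_i\le n\sum_{i>K}p_i\le n^{2-c}, \]
which is negligible. Pairs with $i\le K<j$ need to be handled as well.

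\textbf{Head items overtake tail items.} The key structural observation is that a head item is queried often. Each query moves it forward past its front neighbor. It can be pushed backward only when the item directly behind it is queried, and a tail item is queried with probability less than $n^{-c}$.

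\begin{enumerate}
\item Fix a head item $i$ and consider the relative order of $i$ and the set $T$ of tail items. The position of $i$ among the items in front of it drifts forward at rate at least $p_i$ per step whenever its front neighbor is in $T$. Item $i$ is displaced backward by a tail item only at rate at most $p_j\le n^{-c}$.
\item A Chernoff bound over $n^{c+2}$ steps then shows that, after polynomial time, every head item stays ahead of all tail items except with probability $n^{-2}$. This uses a union bound over times, and the fact that the process restarts from each excursion.
\item Cross pairs $i\le K<j$ therefore contribute at most $\sum_i p_i\cdot n\cdot n^{-2}=O(1/n)$.
\end{enumerate}

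\textbf{Head pairs.} This is the main obstacle: the ordering induced on the head items is not itself a Markov chain. I would couple the full chain with a chain started from $\mu$. Under $\mu$, Coester's theorem bounds the head-pair sum by $1$. It then suffices to show that the two chains agree on the relative order of head items after $n^{29}$ steps, with probability $1-O(n^{-1})$, so that the difference in expected head-pair contribution is at most $n\cdot O(n^{-1})=O(1)$.

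For the coupling I would use identical queries. Once both chains have all head items ahead of all tail items (by the previous step), the head items sit in positions $1..K$ in both chains. Queries to head items then act identically, except at the boundary position $K$, and queries to tail items do not affect the head order except through a swap at the boundary, which happens at rate at most $n^{-c}$.

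On this event, the head order evolves as the transposition chain on $K$ items with probabilities $p_i/\sum_{k\le K}p_k\ge n^{-c}$, up to rare boundary perturbations. This chain has spectral gap at least polynomial in $n^{-c}$: Rivest's stationary measure is reversible, and I would use a canonical-path or comparison argument against adjacent transpositions with rates at least $n^{-c}$, giving relaxation time $n^{O(c)}$. Hence after $n^{29}$ steps its total variation distance to its stationary law is $O(1/n)$.

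It remains to check that this stationary law is the marginal of $\mu$ on head orders. Under the product-form Rivest measure, tail items are behind head items with high probability, so this marginal agrees with the $K$-item stationary measure up to $O(1/n)$ in total variation. Summing the three parts gives a constant well below $1250$, with slack to absorb the constants in the spectral-gap estimate.
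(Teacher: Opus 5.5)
\textbf{There is a genuine gap: a single fixed cutoff separates nothing.} You pick $K$ with $p_K\ge n^{-c}>p_{K+1}$, but nothing prevents $p_{K+1}/p_K\ge 1-n^{-2}$. For the hard case $p_j\propto e^{-j/\sqrt n}$, every consecutive ratio is $e^{-1/\sqrt n}$, so no cutoff anywhere has a polynomially large gap.

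In the first situation step~2 is false. Swapping the positions of $K$ and $K+1$ changes $\mu$ by the factor $(p_{K+1}/p_K)^{d}\ge(1-n^{-2})^{n}$. So $K+1$ precedes $K$ with probability close to $1/2$ under $\mu$. Since $\mu P^t=\mu$, this persists at all times from a $\mu$-distributed start, and hence from some deterministic start.

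The drift mechanism of step~1 also fails. In aggregate, the head--tail inversion count decreases at rate about $p_K$, but it can increase at rate up to $n\,p_{K+1}$, which is larger here. You could discard cross pairs with $p_i\le n^{3-c}$, since their total weight is $O(n^{5-c})$. The remaining pairs, however, can have items of intermediate probability sitting between them, so head--tail adjacency (the only place your drift acts) need not occur. A time-$t$ bound for such pairs from an arbitrary start is exactly the hard part, and the Chernoff/union-bound sketch does not supply it.

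\textbf{The head-pair step has the same defect, plus a time-scale mismatch.} In the smooth example, the relative order of $K-1$ and $K$ changes only when one of them is queried, at rate about $2n^{-c}$. So the $K$-item chain needs $\Omega(n^{c})$ steps to mix. But a tail item directly behind position $K$ is also queried at rate about $n^{-c}$. Boundary perturbations are therefore order-one events per mixing window, not rare ones, and you cannot treat the head order as the $K$-item chain up to small errors. For the same reason, your justification that the head-order marginal of $\mu$ is $O(1/n)$-close to the $K$-item stationary law fails.

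The claimed relaxation time $n^{O(c)}$ is also unsupported. Comparison with unbiased adjacent transpositions, whose stationary law is uniform, incurs the ratio $\max\mu/\min\mu$, which is $\exp(\Omega(n))$ in the smooth example. A direct canonical-paths bound for this product measure is not routine.

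\textbf{What is missing} is a mechanism that handles a continuum of scales and never requires mixing. The paper does this in four steps:
\begin{enumerate}
\item It integrates over thresholds $x$ with an adaptive buffer $\theta(x)$.
\item It bounds within-buffer inversions deterministically by $\theta(x)^{-2}$, which is integrable against $e^x$ by \cref{lem:scale}.
\item It controls separated inversions with an auxiliary exclusion chain that freezes intermediate labels. This chain is linked to $P$ only through the Dirichlet-form comparison \eqref{eq:aux-comparison}.
\item After the regularization of \cref{lem:regularization}, which gives $H\le n^3/2$, it uses \cref{lem:entropy} to make $\Dir_P(g_t,\log g_t)$ small at $t=n^{29}$, even though the chain has not mixed.
\end{enumerate}
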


Our proof shows $C=1250$ suffices.  We have not made any effort to optimize the exponent $29$ or the constant $1250$.

\subsection{Proof overview}

Define the excess cost
\begin{equation} \label{eq:cost}
F(\sigma) = \Cost(\sigma) - \OPT = \sum_{i < j} (p_i - p_j) \mathbf1\{ \sigma(i) > \sigma(j) \}. \end{equation}

To verify the identity, notice that one can sort the list by adjacent transpositions, and each reduces the search cost by $p_i - p_j$ and removes exactly that inversion $(i,j)$.  It will be convenient to work on a logarithmic scale; let $y_i = \log p_i$.

For $x \in \mathbb{R}$, let $I_x(\sigma)$ count inverted pairs
whose logarithmic probabilities lie on opposite sides of $x$:
\[
 I_x(\sigma)=\#\{(i,j) : y_j\le x<y_i,\
                         \sigma(i)>\sigma(j)\}.
\]
Since all $y_i\le0$, we have
\begin{align*}
 F(\sigma)
 &=\sum_{i<j}\left(\int_{y_j}^{y_i}e^x\,dx\right)
                  \ind\{\sigma(i)>\sigma(j)\} =\int_{-\infty}^0e^xI_x(\sigma)\,dx.
\end{align*}

Fix some $x < 0$; we will control $I_x(\sigma)$.  Define an adaptive buffer $\theta = \theta(x)$ so that
\begin{equation} \label{eq:defining-sum}
\sum_{i=1}^n \pos{2\theta-|x-y_i|}=1.
\end{equation}
For technical reasons, we cap $\theta$ at $1/2$.
Observe that 
\begin{equation}
 \#\{i:|y_i-x|\le\theta(x)\}\le\frac1{\theta(x)};
 \label{eq:scale-local}
\end{equation}
indeed, each such $i$ contributes $\theta(x)$ to the left hand side of \eqref{eq:defining-sum}.  (The cap on $\theta$ can only help.)

Call an inversion $(i,j)$ unsurprising if $x-\theta < y_i,y_j < x+\theta$.  The number of unsurprising inversions around $x$ is deterministically bounded by $\theta^{-2}$ by \eqref{eq:scale-local}.  We show that $\int e^x \theta(x)^{-2} dx = O(1)$, so unsurprising inversions have $O(1)$ total contribution.

For each surprising inversion $(i,j)$ we have $y_i - y_j \geq \theta$, and so $p_i/p_j \geq e^\theta$.  Thus the bias is $(p_i - p_j)/(p_i + p_j) \geq \tanh(\theta/2)$, and the number of inversions can be controlled by comparison with the number of inversions in an exclusion process with bias $\tanh(\theta/2)$.  This gives $O(\theta^{-2})$ expected surprising inversions at stationarity, another $O(1)$ contribution to $\mathbb{E}[F]$.

However, while the exclusion chain converges to stationarity in polynomial time, the original chain will not for small $x$.  We can bound the contribution to $F$ of the distance from $\nu$ to stationarity $\mu$ by
\[ n^{O(1)} \sqrt{\Dir(f,\log f)}, \qquad f = \frac{d\nu}{d\mu}. \]

This Dirichlet form is the continuous-time rate of entropy decay, related to the local stationarity notion of Liu, Mohanty, Raghavendra, Rajaraman, and Wu~\cite{liu2024locally}.  We show a discrete-time analog of their results to bring this error to $0$ even if the chain has not mixed.

\subsection*{Acknowledgments}

We thank Christian Coester for introducing us to this problem.  We thank Vishesh Jain and June Vuong for helpful conversations.  The author is supported by a Simons Dissertation Fellowship.

\subsection*{AI usage statement}

The author first worked on this problem in April 2026 without AI, and identified $p_j \propto \exp(-j/\sqrt n)$ as a difficult example.  GPT-6 Astra was able to prove the result for this specific vector.  The author  simplified the proof, and then prompted Astra to generalize the new proof to arbitrary $p$.  This proof was then rewritten.  Codex was used to assist in preparing the manuscript.  All mathematical content and errors are the sole responsibility of the author.

\section{The adaptive buffer}

\subsection{Preliminaries}

Until \cref{sec:entropy}, we will assume $p_i > 0$ for all $i$.  Set $y_i = \log p_i$.

Let $P = P_p$ be the described kernel for the transposition rule.  The kernel $P$ is reversible with respect to
\begin{equation}\label{eq:def-of-mu}
 \mu(\sigma)=\frac1Z\prod_{i=1}^n p_i^{\,n-\sigma(i)}.
\end{equation}
Indeed, moving $j$ ahead of its predecessor $i$ has forward
probability $p_j$, reverse probability $p_i$, and stationary
weight ratio $p_j/p_i$.

We define total variation distance
\( \norm{\alpha-\beta}_{\TV}=\frac12\sum_z|\alpha(z)-\beta(z)|. \)

For any $\mu$-reversible transition matrix $B$, write
\[
 \Dir_B(u,v)=\frac12\sum_{\sigma,\eta}\mu(\sigma)B(\sigma,\eta)
 (u(\sigma)-u(\eta))(v(\sigma)-v(\eta))
 =\mu[u(I-B)v].
\]

We use the convention $0\log0=0$.  For a nonnegative density $f$,
interpret $\Dir_B(f,\log f)$ by its nonnegative edge sum.
Edges of zero conductance contribute $0$, as do edges on which
both values of $f$ vanish.  A positive-conductance edge with
exactly one zero value of $f$ contributes $+\infty$.

\subsection{Integrability of the buffer}

For each $x\in\mathbb R$, let $r(x)>0$ be the unique solution of
\begin{equation} \label{eq:defining-sum-r}
 \sum_{i=1}^n \pos{r(x)-|x-y_i|}=1 \qquad \text{and} \qquad \theta(x) = \frac12 \min\{1,r(x) \}.
\end{equation}
Here $a_+=\max\{a,0\}$.  The left hand side, as a function of $r$,
is continuous, is initially zero, and is strictly increasing once
positive, with limit $+\infty$.  Thus the solution exists and is unique.

When many logarithmic probabilities are near $x$, the scale $\theta(x)$
is small.  The following lemma makes precise the two estimates we
need from this choice.

\begin{lemma}
\label{lem:scale}
The function $\theta$ is $1/2$-Lipschitz and
$1/(2n)\le\theta(x)\le1/2$.  Furthermore,
\begin{equation}
 \int_{-\infty}^0\frac{e^x}{\theta(x)^2}\,dx\le 4+32e.
 \label{eq:scale-integral}
\end{equation}
\end{lemma}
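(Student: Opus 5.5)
Let $G(x,r)=\sum_i \pos{r-|x-y_i|}$, so that $r(x)$ is defined by $G(x,r(x))=1$.

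\textbf{Lipschitz bound.} For each $i$, the map $(x,r)\mapsto \pos{r-|x-y_i|}$ is nondecreasing in $r$. It is also $1$-Lipschitz in $x$, since $|x-y_i|$ is. Hence for $|x-x'|\le\delta$,
\[
G(x',r+\delta)\ge G(x,r)\qquad\text{termwise.}
\]
Taking $r=r(x)$ gives $G(x',r(x)+\delta)\ge 1$, so $r(x')\le r(x)+\delta$. By symmetry $r$ is $1$-Lipschitz. Truncation $\min\{1,\cdot\}$ preserves this, and halving then makes $\theta$ $1/2$-Lipschitz.

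\textbf{Range of $\theta$.} The upper bound $\theta\le 1/2$ is immediate. For the lower bound, each term is at most $r$, so $1=G(x,r(x))\le n\,r(x)$. Thus $r(x)\ge 1/n$, and $\theta(x)\ge \min\{1,1/n\}/2=1/(2n)$.

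\textbf{Integral bound.} Split $(-\infty,0]$ into the set where $r(x)\ge 1$ and the set where $r(x)<1$.
\begin{itemize}
\item \emph{Where $r(x)\ge1$.} Here $\theta=1/2$, so the integrand is $4e^x$. Over all $x\le0$ this contributes at most $4$.
\item \emph{Where $r(x)<1$.} Here $\theta^{-2}=4/r^2$, and I need $\int e^x r(x)^{-2}\,dx\le 8e$.
\end{itemize}

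For the second case, use $1=G(x,r)\le r\cdot N(x,r)$, where $N(x,r)=\#\{i:|y_i-x|<r\}$. This gives $1/r\le N(x,r)\le N(x,1)$. It also gives $1/r^2\le N(x,r)/r$, but that still carries a factor $1/r$. A better route is to write
\[
\frac1{r^2}=G(x,r)\cdot\frac1{r^2}=\sum_i \frac{\pos{r-|x-y_i|}}{r^2}.
\]
Here $\pos{r-|x-y_i|}/r^2$ is supported on $|x-y_i|<r$ and is at most $1/r$, which is again unhelpful alone. So I would instead bound
\[
\frac1{r^2}\le \frac{N(x,r)^2}{1}\le N(x,1)\,N(x,r)\le N(x,1)^2,
\]
and then use the mass constraint. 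The points $y_i$ in $(x-1,x+1)$ satisfy $p_i\ge e^{x-1}$, and $\sum p_i\le1$, so
\[
N(x,1)\le e^{1-x}.
\]
This gives $e^x N^2\le e^{2-x}$, which is bad for very negative $x$, so the squared bound is too lossy.

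The better route is
\[
\frac{e^x}{r^2}\le e^x\,\frac{N(x,r)}{r},
\]
together with Fubini:
\[
\int e^x\,\frac{N(x,r(x))}{r(x)}\,dx=\sum_i\int_{|x-y_i|<r(x)} \frac{e^x}{r(x)}\,dx.
\]
For fixed $i$, the region $|x-y_i|<r(x)$ is contained in $|x-y_i|<1$, where $e^x\le e\,p_i$. By the Lipschitz property, near $y_i$ the value $r(x)$ is comparable to $r(y_i)$. More precisely, if $|x-y_i|<r(x)$ then $r(y_i)\le r(x)+|x-y_i|<2r(x)$, and $r(x)\le r(y_i)+|x-y_i|$ forces $|x-y_i|\le$ roughly $2r(y_i)$. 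So the $x$-interval has length $O(r(y_i))$ while $1/r(x)\le 2/r(y_i)$. Each $i$ therefore contributes $O(e\,p_i)$, and summing over $i$ gives $O(e)$.

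The main obstacle is pinning the constants to reach $32e$. With interval length $4r(y_i)$ and $1/r(x)\le 2/r(y_i)$, each $i$ contributes $8e\,p_i$, so $\int e^x r^{-2}\,dx\le 8e$. Multiplying by the factor $4$ from $\theta^{-2}=4/r^2$ gives $32e$, and adding the first case yields $4+32e$.
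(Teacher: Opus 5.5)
Your Lipschitz and range arguments are correct. The integral bound has a genuine gap at the localization step. From $|x-y_i|<r(x)$ and $r(x)\le r(y_i)+|x-y_i|$ you only get $|x-y_i|<r(y_i)+|x-y_i|$, which is vacuous. So nothing forces $|x-y_i|$ to be at most about $2r(y_i)$.

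The claim is in fact false, and the quantity you reduced to is unbounded. Take $p_i=1/n$ for all $i$, so every $y_i=-\log n$. Then $r(x)=|x+\log n|+1/n$, hence $|x-y_i|<r(x)$ for every $x$, and $N(x,r(x))=n$ throughout $\{r<1\}=\{|x+\log n|<1-1/n\}$. That region has length about $2$, not $O(r(y_i))=O(1/n)$. Moreover, for $n\ge 3$,
\[
\int_{\{r<1\}} e^x\,\frac{N(x,r(x))}{r(x)}\,dx\;\ge\; n\cdot\frac{1}{en}\cdot 2\int_0^{1-1/n}\frac{du}{u+1/n}=\frac{2}{e}\log n .
\]
So the step $r^{-2}\le N(x,r)/r$ already loses too much, and no choice of constants rescues it. That step bounds each weight $\pos{r(x)-|x-y_i|}$ by $r(x)$, and this throws away the decay you need.

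The identity you wrote down and set aside, $r^{-2}=\sum_i \pos{r(x)-|x-y_i|}/r(x)^2$, is the right starting point. The missing observation is that, since $r$ is $1$-Lipschitz, $r(x)-|x-y_i|\le r(y_i)$. So the $i$th weight is small far from $y_i$. Write $\delta=|x-y_i|$ and split into two cases.

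\begin{enumerate}
\item For $\delta\le r(y_i)/2$, you have $r(x)\ge r(y_i)/2$. The integrand $\pos{r(x)-\delta}/r(x)^2$ is then at most $4/r(y_i)$ on an interval of length $r(y_i)$, contributing at most $4$.
\item For $\delta>r(y_i)/2$, positivity of the numerator gives $r(x)>\delta$. The tail contributes at most $2\int_{r(y_i)/2}^{\infty}r(y_i)\,\delta^{-2}\,d\delta=4$.
\end{enumerate}

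Combined with $e^x\le e\,p_i$ on the support (as you noted), this gives $\int_{\{r<1\}}e^x r(x)^{-2}\,dx\le 8e$, hence $4+32e$ overall. This is the paper's argument, phrased there with $\theta$ and a per-label constant $32$. Each label's integral is bounded because of the $\delta^{-2}$ decay of its weight, not because its support is short.
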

\begin{proof}
The left hand side of \eqref{eq:defining-sum-r} is at most $n \cdot r(x)$, so $r(x)\ge1/n$ and $\theta(x) \geq 1/(2n)$.
For $x,z\in\mathbb R$, the triangle inequality gives
\[
 \sum_{i=1}^n \pos{r(x)+|z-x|-|z-y_i|}
 \ge\sum_{i=1}^n \pos{r(x)-|x-y_i|}=1.
\]
Consequently $r(z)\le r(x)+|z-x|$.  Thus $r$ is $1$-Lipschitz and so $\theta$ is $1/2$-Lipschitz.
Notice that
\begin{equation*}
 \frac{1}{\theta(x)^2} \leq 4 + \frac{1}{\theta(x)^2} \sum_{i=1}^n \pos{2\theta(x)-|x-y_i|}.
\end{equation*}
If $\theta(x) = 1/2$, then clearly $1/\theta(x)^2 = 4$ and the sum is nonnegative.  If $\theta(x) \ne 1/2$, then the sum is exactly $1$.  Thus
\begin{align*}
    \int_{-\infty}^0 \frac{e^x}{\theta(x)^2} dx & \leq \int_{-\infty}^0 e^x \left( 4 + \frac{1}{\theta(x)^2} \sum_{i=1}^n (2\theta(x) - |x-y_i|)_+ \right) dx \\
    & = 4 + \sum_{i=1}^n \int_{-\infty}^0 e^x \frac{(2\theta(x) - |x-y_i|)_+}{\theta(x)^2} dx.
\end{align*}

Observe that if the numerator is nonzero, then $|x - y_i| \leq 2\theta \leq 1$.  Thus $e^x \leq e^{y_i+1} = e p_i$, giving us the bound
\begin{align*}
    \int_{-\infty}^0 \frac{e^x}{\theta(x)^2} dx \leq 4 + e \sum_{i=1}^n p_i \int_{-\infty}^0 \frac{(2\theta(x) - |x-y_i|)_+}{\theta(x)^2} dx.
\end{align*}

Fix $i$; we now bound this integral by $32$.  By Lipschitz continuity, we may bound the numerator by $2\theta(y_i)$.  We must lower bound the denominator in two cases.

Write $\delta=|x-y_i|$.  If $\delta\le\theta(y_i)$, Lipschitz continuity gives $\theta(x) \geq \theta(y_i)/2$.  If $\delta > \theta(y_i)$, then positivity of the numerator implies $2\theta(x) > \delta$.  Thus we have
\[ \int_{-\infty}^0 \frac{(2\theta(x) - |x-y_i|)_+}{\theta(x)^2} dx \leq 2\theta(y_i) \cdot \frac{2\theta(y_i)}{\theta(y_i)^2/4} + 2 \int_{\theta(y_i)}^\infty \frac{2\theta(y_i)}{\delta^2/4} d\delta \leq 32. \]

Thus
\[ \int_{-\infty}^0 \frac{e^x}{\theta(x)^2} dx \leq 4 + 32e. \qedhere \]
\end{proof}

\subsection{Discarding local contributions}

For real threshold $a$ and buffer $b>0$, define
\[
 J_{a,b}(\sigma)=\#\{(i,j):y_j\le a,\ y_i>a+b,\
                              \sigma(j)<\sigma(i)\}.
\]
The inequalities force $p_i>p_j$, so the weak ordering of the labels
automatically gives $i<j$.  The items need not be adjacent.
For each $x<0$, put
\[
 J_x=J_{x,\theta(x)}.
\]

\begin{lemma}
\label{lem:separated-reduction}
For every permutation $\sigma$,
\[
 F(\sigma)\le2+(1+\sqrt e)\int_{-\infty}^0
       e^xJ_x(\sigma)\,dx.
\]
\end{lemma}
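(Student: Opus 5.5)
The plan is to work pair by pair rather than threshold by threshold. By \eqref{eq:cost}, $F(\sigma)=\sum(p_i-p_j)$ over inverted pairs $i<j$ with $\sigma(i)>\sigma(j)$, and each weight is $p_i-p_j=\int_{y_j}^{y_i}e^x\,dx$. A fixed inverted pair $(i,j)$ contributes to $J_x$ exactly when $y_j\le x$ and $x+\theta(x)<y_i$. Since $\theta$ is $1/2$-Lipschitz by \cref{lem:scale}, the map $x\mapsto x+\theta(x)$ is strictly increasing. Hence the set of such $x$ is an interval $[y_j,x^*)$, where $x^*$ is defined by $x^*+\theta(x^*)=y_i$ (the interval is empty if $x^*\le y_j$). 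Consequently
\[
\int_{-\infty}^0 e^xJ_x(\sigma)\,dx=\sum_{\text{inverted }(i,j)}\int_{y_j}^{\max(x^*,y_j)}e^x\,dx ,
\]
and it suffices to compare, for each pair, the full weight $\int_{y_j}^{y_i}e^x\,dx$ with the truncated weight $\int_{y_j}^{x^*}e^x\,dx$.

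\textbf{Step 1 (separated pairs).} Suppose $x^*-y_j\ge y_i-x^*=\theta(x^*)$. The missing piece is
\[
e^{y_i}-e^{x^*}=e^{x^*}\bigl(e^{\theta(x^*)}-1\bigr),
\]
while the truncated weight is at least $\int_{x^*-\theta(x^*)}^{x^*}e^x\,dx=e^{x^*}\bigl(1-e^{-\theta(x^*)}\bigr)$. The ratio of these two quantities is $e^{\theta(x^*)}\le\sqrt e$, using $\theta\le1/2$. So for such pairs the full weight is at most $(1+\sqrt e)$ times the truncated weight, which is exactly the factor in the statement.

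\textbf{Step 2 (local pairs).} The remaining inverted pairs have $y_i-y_j<2\theta(x^*)$, so both endpoints lie within $2\theta(x^*)$ of $x^*$. Their total weight must be bounded by the additive constant $2$, uniformly in $\sigma$; I would simply bound it by the total weight of \emph{all} such close pairs, inverted or not. I would rewrite this total as $\int e^xN_x\,dx$, where $N_x$ counts close pairs straddling $x$, meaning $y_j\le x<y_i$.
\begin{itemize}
\item By Lipschitz continuity, every point of $[y_j,y_i]$ is within $2\theta(x^*)$ of $x^*$, so $\theta(x)$ is comparable to $\theta(x^*)$. Hence both items of a pair counted in $N_x$ lie within $O(\theta(x))$ of $x$.
\item The defining relation \eqref{eq:defining-sum-r}, as in \eqref{eq:scale-local}, bounds the number of items in such a window by $O(1/\theta(x))$. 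By AM--GM the number of straddling pairs is at most a quarter of the square of that count.
\item The resulting bound $\int e^x\,O(\theta(x)^{-2})\,dx$ is then finite by \cref{lem:scale}.
\end{itemize}

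\textbf{Main obstacle.} I expect Step 2 to be the crux, specifically getting the constant down to $2$ rather than just $O(1)$. The crude version above gives roughly $(4+32e)/4$. To sharpen it, I would avoid passing through \eqref{eq:scale-integral}. Instead I would charge each local pair $(i,j)$ directly against the defining sum. The aim is to show that, for fixed $x$, the sum of $(p_i-p_j)$-type densities over close straddling pairs is at most $e^x$ times $\sum_i\pos{2\theta(x)-|x-y_i|}$ divided by a suitable power of $\theta$. This uses that a pair's weight is only $e^{x}\cdot O(y_i-y_j)$, which supplies an extra factor of $\theta$ compared with simply counting pairs. If the constant $2$ proves elusive, the structure of the argument still yields the lemma with some absolute additive constant.
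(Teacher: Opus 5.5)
Your pair-by-pair reduction and Step 1 coincide with the paper's proof. Your $x^*$ is the paper's $b_i$ (defined by $b_i+\theta(b_i)=y_i$), the split is the same ($y_j$ below or above $b_i-\theta(b_i)$), and the ratio bound $1+e^{\theta(b_i)}\le1+\sqrt e$ is the same.

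The gap is Step 2. As you concede, it does not produce the additive constant $2$ that the lemma asserts. This is not cosmetic: the $2$ enters $C_0$, and replacing it by your estimate $(4+32e)/4=1+8e\approx22.7$ would already push $3C_0+3$ past the $1250$ of \cref{thm:main}.

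Moreover, the crude version is not justified as written. The bound \eqref{eq:scale-local} controls only items within $\theta(x)$ of $x$, or within $c\,\theta(x)$ for $c<2$ with a worse constant. Items at distance nearly $2\theta(x)$ contribute almost nothing to \eqref{eq:defining-sum-r}, so an arbitrarily large cluster can sit there. The two items of a close pair straddling $x$ are only known to lie within $2\theta(x^*)\le4\theta(x)$ of $x$. So bounding $N_x$ by a multiple of $\theta(x)^{-2}$ needs an additional covering argument, and your figure $(4+32e)/4$ does not follow as stated. Also, points of $[y_j,y_i]$ are within $\theta(x^*)$ of $x^*$. With the radius $2\theta(x^*)$ you wrote, the $1/2$-Lipschitz bound gives no positive lower bound on $\theta(x)$.

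The missing idea is to avoid integrating over $x$ for local pairs. Instead, count them at the single point $x^*=b_i$, grouped by the upper endpoint $i$:
\begin{itemize}
\item Every local partner $j$ of $i$ satisfies $b_i-\theta(b_i)\le y_j\le y_i=b_i+\theta(b_i)$, so \eqref{eq:scale-local} applied at $b_i$ gives at most $1/\theta(b_i)$ of them.
\item Since $y_j\ge y_i-2\theta(b_i)$, each has weight $p_i-p_j\le p_i(1-e^{-2\theta(b_i)})\le2\theta(b_i)p_i$.
\end{itemize}
The factors $\theta(b_i)$ cancel, so the local pairs with upper endpoint $i$ weigh at most $2p_i$ in total. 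Summing over $i$ gives exactly $2$, uniformly in $\sigma$. This realizes your ``extra factor of $\theta$ from the weight'' at one scale per item. It needs neither the integral bound of \cref{lem:scale} nor any Lipschitz comparison between different points.
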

\begin{proof}
Since $\theta$ is $1/2$-Lipschitz and bounded, the map
$x\mapsto x+\theta(x)$ is strictly increasing and maps
$\mathbb R$ onto $\mathbb R$.  For each label $i$, let $b_i$ be
the unique solution of $b_i+\theta(b_i)=y_i$.

First consider inverted pairs $(i,j)$ with $y_j\ge b_i-\theta(b_i)$.
Since $y_j\le y_i=b_i+\theta(b_i)$, all such lower endpoints lie in
$[b_i-\theta(b_i),b_i+\theta(b_i)]$.  By \eqref{eq:scale-local}, there
are at most $1/\theta(b_i)$ of them.  Each contributes at most
\[
 p_i-p_j\le p_i(1-e^{-2\theta(b_i)})\le2\theta(b_i)p_i
\]
to $F$.  Summing first over these lower endpoints and then over
$i$ bounds their total contribution by $2\sum_i p_i=2$.

For any remaining inverted pair, $y_j<b_i-\theta(b_i)$.  Monotonicity of
$x+\theta(x)$ shows that this pair is counted by $J_x$ exactly
when $y_j\le x<b_i$.  Since $b_i<y_i\le0$, its contribution
to the integral in the statement is $e^{b_i}-e^{y_j}$.  Moreover,
\[
 \frac{p_i-p_j}{e^{b_i}-e^{y_j}}
 =e^{\theta(b_i)}+\frac{e^{\theta(b_i)}-1}{e^{b_i-y_j}-1}
 \le e^{\theta(b_i)}+1\le1+\sqrt e,
\]
where we used $\theta(b_i)\le1/2$.
Sum over these remaining inversions and use \eqref{eq:cost}.
The other contributions to the integral are nonnegative, which
proves the claim.
\end{proof}

\section{Auxiliary exclusion chains}
\label{sec:auxiliary}

Throughout this section, assume that $p_i>0$ for every label $i$.
We only need separation scales $0<\theta\le1/2$, since
\cref{lem:scale} places $\theta(x)$ in this range.
Fix two separated probability classes.  The auxiliary chain freezes
the intermediate labels and the relative order within each class,
while allowing adjacent low--high pairs in the remaining subsequence
to exchange.  We first bound its binary inversion count, then realize
its exchanges by short sorting paths in the original chain.

\subsection{The biased exclusion chain}

We first control a homogeneous chain which sorts elements of relative weights $1,e^{-\theta}$.

\begin{definition}[Biased exclusion chain]
\label{def:exclusion}
Fix $N\ge2$ and $0<\theta\le1/2$.
The biased exclusion chain
$(Y_u)$ has as its states binary words of length $N$ with fixed
positive numbers of $L$'s and $H$'s.  At each step choose one of
the $N-1$ adjacent pairs uniformly.  When its letters differ, resample their order,
putting $H$ first with probability $(1+e^{-\theta})^{-1}$
and $L$ first otherwise.  If they agree, leave the word unchanged.
Let $\zeta$ be the stationary law of the chain.
\end{definition}

\begin{lemma}
\label{lem:exclusion}
Let $V(Y)$ denote the number of pairs in which an $L$ precedes an $H$.  Then for any $n \geq 1024$, $N \leq n$, and $0 < \theta \leq 1/2$, from an arbitrary start $Y_0$, the chain in \cref{def:exclusion} satisfies
\[
\E V(Y_{n^4})\le\frac14+\frac{\pi^2}{6\theta^2}.
\]
\end{lemma}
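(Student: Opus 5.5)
We split the bound into a stationary estimate and a mixing estimate. The biased exclusion chain is a reversible Markov chain, and its stationary law $\zeta$ is explicit. Write $q=e^{-\theta}$ and encode a word by the positions of its $H$'s. Detailed balance for the swap $LH\leftrightarrow HL$ has ratio $(1+q)^{-1}/(q(1+q)^{-1})=1/q$. Hence $\zeta(Y)\propto q^{V(Y)}$: each $L$ preceding an $H$ costs a factor $q$. This is the Mallows-type measure on multiset permutations.

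Under $\zeta$ we bound $V$ through the $q$-binomial generating function. With $a$ letters $L$ and $b$ letters $H$, $\sum_Y q^{V(Y)}=\binom{a+b}{a}_q=\prod_{k=1}^{a}\frac{1-q^{b+k}}{1-q^k}$. Taking $-q\frac{d}{dq}\log$ gives
\[ \E_\zeta V=\sum_{k=1}^a\Big(\frac{kq^k}{1-q^k}-\frac{(b+k)q^{b+k}}{1-q^{b+k}}\Big)\le\sum_{k\ge1}\frac{kq^k}{1-q^k}. \]
We use $1-q^k\ge k\theta q^{k}$; this holds because $e^{k\theta}-1\ge k\theta$, so $(1-e^{-k\theta})\ge k\theta e^{-k\theta}$. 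It bounds each term by $1/\theta$, which does not sum. A finer estimate is needed, and the $\pi^2/6$ in the statement suggests the correct route is $\frac{kq^k}{1-q^k}=\frac{k}{e^{k\theta}-1}\le \frac{k}{(k\theta)^2/2\cdot\ldots}$. More cleanly, $\frac{k}{e^{k\theta}-1}\le \frac{1}{\theta}\cdot\frac{1}{k\theta}\cdot\frac{k\theta\cdot k\theta}{e^{k\theta}-1}$ is not summable either. The right tool is the Lambert series identity $\sum_k \frac{kq^k}{1-q^k}=\sum_{m}\sigma(m)q^m$. Alternatively, compare with the integral $\sum_k \frac{k}{e^{k\theta}-1}\le \int_0^\infty\frac{u}{e^{\theta u}-1}du+(\text{boundary})=\frac{\pi^2}{6\theta^2}+O(1/\theta)$. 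To get the clean form $\pi^2/(6\theta^2)$ we use the monotone decrease of $k/(e^{k\theta}-1)$, which holds since $u/(e^u-1)$ is decreasing, and compare with the integral over $[0,\infty)$, giving $\sum_{k\ge1}\le\int_0^\infty \frac{u\,du}{e^{\theta u}-1}=\frac{\pi^2}{6\theta^2}$. So $\E_\zeta V\le \pi^2/(6\theta^2)$.

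Next we need mixing: after $n^4$ steps the law of $Y_{n^4}$ is within total variation $\varepsilon$ of $\zeta$. Since $V\le N^2/4\le n^2/4$, this gives $\E V(Y_{n^4})\le \E_\zeta V+\frac{n^2}{4}\varepsilon$, and $\varepsilon\le n^{-2}$ suffices for the additive $1/4$. For the mixing bound we use the standard monotone coupling (censoring, or the height-function coupling of Wilson/Benjamini–Berger–Hoffman–Mossel for biased exclusion). Encode the word as a lattice path and couple the top and bottom configurations with the same uniform choice of edge and the same uniform variable. This coupling is monotone for the partial order on paths. The expected area between the coupled paths is at most $N^2/4$. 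Each step decreases it in expectation: the unbiased part alone gives a drift, and the bias only helps in the appropriate potential, e.g.\ Wilson's $\sum\cos$-eigenfunction argument yields coalescence time $O(N^3\log N)$ for symmetric exclusion with $N-1$ edges chosen per step. Coalescence time $O(N^3\log N)$ is below $n^4/\log$-factors for $n\ge1024$. Markov's inequality, applied over repeated blocks, then gives $\varepsilon\le 2^{-n^4/(CN^3\log N)}\le n^{-2}$. The threshold $n\ge1024$ is what makes these explicit constants work.

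The main obstacle is a clean, bias-uniform explicit mixing bound. A general bias might slow the symmetric drift argument, but the bias toward sorting only speeds coalescence of the monotone coupling. The plan is therefore to use Wilson's eigenfunction $\Phi=\sum_x \sin(\pi x/N)\,(\text{height difference at }x)$. We would verify $\E[\Delta\Phi]\le-(1-\cos(\pi/N))\Phi/(N-1)$ under the coupling for any bias. This verification is a local check at each site where the two paths differ.
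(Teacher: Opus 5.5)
Your stationary estimate is correct, though it takes a different route from the paper. Differentiating $\log\binom{a+b}{a}_q$ gives exactly the sum $\sum_k k/(e^{k\theta}-1)$ that the paper gets from the bijection $\{g_j\ge r\}\to\{g_0\ge r\}$; the integral comparison is then the same. (The operator is $q\,\frac{d}{dq}$, not $-q\,\frac{d}{dq}$, but your displayed formula is right.)

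The genuine gap is the mixing step, which is the real content of the lemma. The claim that the bias ``only helps'' Wilson's sine potential is false. Write $p=(1+e^{-\theta})^{-1}$, let $s_k$ count the $H$'s among the first $k$ letters, and set $D_k=s_k(\mathrm{top})-s_k(\mathrm{bot})\ge0$. Given that edge $k$ is chosen, the expected change of $D_k$ is
\[
 \tfrac12(D_{k-1}+D_{k+1})-D_k+\bigl(p-\tfrac12\bigr)\bigl(\ind^{\mathrm{top}}_k-\ind^{\mathrm{bot}}_k\bigr),
\]
where $\ind_k$ indicates that positions $k,k+1$ carry different letters. This correction term is nonlinear and has no sign.

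For a concrete failure, let the two words agree except that positions $k-1,\dots,k+2$ read $LHLH$ in the top word and $LLHH$ in the bottom one. Then $D=\ind_{\{k\}}$. An update at edge $k$ coalesces, but an update at edge $k\pm1$ creates a new discrepancy with probability $p$ rather than $\frac12$, so
\[
 \E[\Delta\Phi]=\frac{\sin(\pi k/N)}{N-1}\bigl(2p\cos(\pi/N)-1\bigr).
\]
This is positive as soon as $\tanh(\theta/2)>\sec(\pi/N)-1\approx\pi^2/(2N^2)$. On the other hand, $V\le N^2/4\le\pi^2/(6\theta^2)$ whenever $N\theta\le 2\pi/\sqrt6$, so the lemma only has content when $\theta\gtrsim 1/N$. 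Your proposed local verification therefore fails in the entire range where the lemma is not trivial.

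What is missing is a mixing bound for biased exclusion that is uniform in the bias. This is a genuine theorem, not a routine check, because the contracting potential has to be adapted to the drift (for example, exponentially weighted metrics in path coupling). The paper does not prove it from scratch. It identifies the chain with the discrete-time exclusion process of Levin--Peres with bias $\tanh(\theta/2)\ge\frac{\sqrt2}{3}\theta$ and invokes their Theorem~10. That gives $t_{\mathrm{mix}}(n^{-2})\le\frac{9N}{\theta^2}\bigl(4\log n+\frac{\theta N}{2}\bigr)\le n^4$ when $\theta>1/(2n)$ and $n\ge1024$. The complementary range $\theta\le 1/(2n)$ follows from the trivial bound $V\le n^2/4\le\pi^2/(6\theta^2)$. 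With that input, your total-variation step $\E V(Y_{n^4})\le\E_\zeta V+\frac{n^2}{4}n^{-2}$ goes through as you wrote it.
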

\begin{proof}
If \(0<\theta\le1/(2n)\), the conclusion follows from \(V\le N^2/4\le n^2/4\le\pi^2/(6\theta^2)\). We therefore assume \(\theta>1/(2n)\) in the mixing argument.

Let $h$ be the number of high letters.
Detailed balance shows that $\zeta(Y) \propto e^{-\theta V(Y)}$.

Let $g_j$ be the number of low letters with exactly $j$ high letters succeeding them.  Notice $V = \sum_j j g_j$.  Suppose $g_j \geq r$.  Then we can move this block of $r$ letters $L$ to the end of the word and decrease $V$ by $jr$.  This gives a bijection $\{ g_j \geq r \}$ to $\{ g_0 \geq r \}$.  Thus
\[
 \Prb_\zeta\{g_j\ge r\}
 =e^{-\theta jr}\Prb_\zeta\{g_0\ge r\}\le e^{-\theta jr}.
\]
Summing these tail bounds yields
\[ \E_\zeta V
 \le\sum_{j=1}^h j\sum_{r\ge1}e^{-\theta jr}
 =\sum_{j=1}^h\frac{j}{e^{\theta j}-1}
 \le\int_0^\infty\frac{j}{e^{\theta j}-1}\,dj
 =\frac{\pi^2}{6\theta^2}.
\]

Viewing the $L$'s as particles, our chain is exactly the discrete-time
exclusion process of Levin--Peres
\cite{levin2016mixing}, with bias
\[
 \beta=\tanh(\theta/2)
 \ge\frac\theta2-\frac{\theta^3}{24}
 \ge\frac{\sqrt2}{3}\theta.
\]
Here we used $\tanh u\ge u-u^3/3$ and $\theta\le1/2$.  Levin and Peres~\cite[Theorem~10]{levin2016mixing} prove the total variation mixing time from a worst-case start satisfies
\begin{align*}
 t_{\mathrm{mix}}(n^{-2})
 &\le
 \frac{9N}{\theta^2}
 \left(4\log n+\frac{\theta N}{2}\right) 
 \le n^4.
\end{align*}
Since $V \le N^2/4\le n^2/4$, the stationary estimate above now yields
\[
 \E V(Y_{n^4})
 \le \E_\zeta[V]
       +\frac{n^2}{4}\|\mathcal L(Y_{n^4})-\zeta\|_{\TV}
 \le \frac{\pi^2}{6\theta^2}+\frac14. \qedhere
\]
\end{proof}

\subsection{Comparison with the auxiliary chain}

We now define an auxiliary chain which uses the actual exchange probabilities.

\begin{definition}[Auxiliary chain]
\label{def:auxiliary}
Fix $x \in\mathbb R$ and $0 < \theta\le1/2$.  Let $B_{x,\theta}$ denote the following kernel.  Call label $i$ \emph{low} if $y_i \leq x$, \emph{intermediate} if $x < y_i \leq x + \theta$, and \emph{high} if $y_i > x + \theta$.
If fewer than two labels are non-intermediate, set $B_{x,\theta} = I$.

Choose uniformly a pair $(i,j)$ whose labels are adjacent in the subsequence
of non-intermediate labels.
If $i$ and $j$ are both high or both low, reject the move.
Otherwise, let $d = \sigma(j) - \sigma(i)$ denote the physical distance
(i.e.~the number of intermediate labels between them plus $1$).
Switch $i$ and $j$ with probability
\begin{equation}\label{eq:heat-bath} \frac{p_j^d}{p_j^d + p_i^d}. \end{equation}
\end{definition}


Observe that \eqref{eq:heat-bath} is the heat-bath rule with respect to $\mu$ as defined in \eqref{eq:def-of-mu}.  Thus $\mu$ is a stationary measure; however, this is generally not the unique stationary measure.

\begin{proposition}
\label{prop:auxiliary}
For every $n\ge1024$, $x\in\mathbb R$, $0 < \theta\le1/2$, and permutation $\sigma$,
\begin{equation}
 \bigl(B_{x,\theta}^{n^4}J_{x,\theta}\bigr)(\sigma)
 \le \frac14+\frac{\pi^2}{6\theta^2}.
 \label{eq:aux-relaxation}
\end{equation}
For every real-valued function $v$ on permutations,
\begin{equation}
 \Dir_{B_{x,\theta}}(v,v)\le2n^2e^{-x}\Dir_P(v,v).
 \label{eq:aux-comparison}
\end{equation}
\end{proposition}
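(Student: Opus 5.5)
The proposal handles the two bounds separately: a monotone coupling with the biased exclusion chain of \cref{def:exclusion} for \eqref{eq:aux-relaxation}, and a canonical-path comparison for \eqref{eq:aux-comparison}.

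\emph{Relaxation bound \eqref{eq:aux-relaxation}.} Under $B_{x,\theta}$, the positions of the intermediate labels never change. Moves between two low labels or two high labels are rejected, so the relative order within the low class and within the high class is also frozen. Hence the chain is determined by the binary word $Y\in\{L,H\}^N$ that records the classes of the non-intermediate labels in their fixed physical slots, with $N\le n$. Moreover $J_{x,\theta}(\sigma)=V(Y)$. If $N<2$, or one class is empty, then $J_{x,\theta}\equiv0$ and there is nothing to prove.

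Otherwise, look at an adjacent pair in the subsequence with physical distance $d\ge1$, and suppose it holds a low $i$ and a high $j$ (in either order). Since $y_j-y_i>\theta$, the heat-bath probability \eqref{eq:heat-bath} of placing the high label first is $1/(1+(p_i/p_j)^d)\ge 1/(1+e^{-\theta})$, whichever order they started in. This is exactly the $H$-first probability of the exclusion chain. So $B_{x,\theta}$ is an exclusion process whose edge-dependent, state-dependent $H$-first probability $q_B\ge q:=(1+e^{-\theta})^{-1}$ everywhere.

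I would use the basic coupling. Both chains choose the same adjacent pair and the same uniform $U$, and each puts $H$ first iff $U<q_B$ (respectively $U<q$). I would then check edge by edge that the partial order ``the $k$-th $H$ of $Y^B$ is at or left of the $k$-th $H$ of $Y$, for every $k$'' is preserved. This is the usual attractiveness check: the only potentially bad configurations are those where the two words agree except near the chosen edge, and there $q_B\ge q$ pushes $Y^B$ in the favourable direction. Since $V$ is monotone in this order, $V(Y^B_u)\le V(Y_u)$ pathwise from a common start. Then \cref{lem:exclusion} gives \eqref{eq:aux-relaxation}.

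\emph{Comparison \eqref{eq:aux-comparison}.} I use the canonical-path method. Fix a $B$-transition $\sigma\to\eta$ that exchanges the low $i$ and high $j$ at distance $d$, with intermediates $k_1,\dots,k_{d-1}$ between them. Realise it in $P$ by a path of $2d-1\le 2n$ adjacent transpositions: move the rear label forward past the intermediates and the front label, then move the front label back past the intermediates. Every transposition on this path swaps two labels at least one of which is non-low, with probability $>e^x$.

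Take an edge $\xi\leftrightarrow\xi'$ of the path, and let the larger-probability label be $m$, with $p_m>e^x$. By reversibility its conductance is $\mu(\xi^-)p_m$, where $\xi^-$ is the endpoint with $m$ behind. I would verify from the product form \eqref{eq:def-of-mu} that every state along the path has $\mu\ge\min\{\mu(\sigma),\mu(\eta)\}$. The reason is that intermediate labels only get displaced temporarily between a heavier and a lighter label, so the weight interpolates monotonically between the endpoint weights. Combined with the heat-bath bound $\mu(\sigma)B(\sigma,\eta)\le\frac1{N-1}\min\{\mu(\sigma),\mu(\eta)\}$, each path edge then satisfies $\mu(\sigma)B(\sigma,\eta)\le e^{-x}\mu(\xi)P(\xi,\xi')$.

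Cauchy--Schwarz gives $(v(\sigma)-v(\eta))^2\le(2d-1)\sum_{\text{path}}(v(\xi)-v(\xi'))^2$. For congestion, a given $P$-edge lies on at most about $n$ paths, determined by the distance and the offset along the path, and the $1/(N-1)$ factor absorbs part of this. The result is $\Dir_B(v,v)\le 2n^2e^{-x}\Dir_P(v,v)$.

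The step I expect to be the main obstacle is the weight bound along the path, i.e.\ showing that no intermediate state has $\mu$ much smaller than the endpoints; I would settle it by computing the ratio $\mu(\xi)/\mu(\sigma)$ explicitly as a product of $p_{k_\ell}/p_i$ or $p_j/p_{k_\ell}$ factors, each of which is at least $1$ after choosing the path's order appropriately. The coupling monotonicity is the other delicate point, but it is standard.
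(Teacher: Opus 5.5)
Your proposal is correct and follows essentially the same proof as the paper. For \eqref{eq:aux-relaxation} both use the same prefix-count (attractive) coupling with the homogeneous exclusion chain. For \eqref{eq:aux-comparison} both use the same canonical path: the rear label moves forward past the intermediates and the front label, then the front label moves back. Both arguments rest on weights being monotone along that path, request probabilities exceeding $e^x$, path length $2d-1\le 2n$, and congestion at most $n$.

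One small tightening of your congestion step: the $P$-edge together with the distance $d$ alone determines the $B$-exchange, since the offset along the path is then forced. So the congestion is at most $n$ outright, and you do not need the $1/(N-1)$ factor, which in any case absorbs nothing when $N=2$.
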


The factor $e^{-x}$ reflects the request probabilities used by the
comparison paths: every requested label has probability at least $e^x$.

\begin{proof}
If either class is empty, $J_{x,\theta} \equiv 0$, $B_{x,\theta} = I$, and so both conclusions are immediate.

Write $B=B_{x,\theta}$.  Define the word $W_0$ by deleting the intermediate labels from
$\sigma$ and replacing the remaining labels by $H$ or $L$ to denote high or low.
Let $N$ be the length of this word and $h$ its number of high letters.

We first prove \eqref{eq:aux-relaxation} via coupling with the exclusion process analyzed above.  Let $W_t$
be the binary word of the auxiliary chain and let $Y_t$
be the homogeneous chain of \cref{def:exclusion}, both started
from the word $W_0$ constructed.
At each step, perfectly couple the adjacent pair chosen and maximally couple the Bernoulli random variable determining whether or not to put $H$ in front.  Notice that, for a high label $u$ and a low label $\ell$ at distance $d$, $W_t$ puts the high label $H$ in front with probability
\[
 \frac{p_u^d}{p_u^d+p_\ell^d}
 =\frac{1}{1+e^{-d(y_u-y_\ell)}}
 \ge\frac{1}{1+e^{-\theta}}.
\]

We claim that this coupling keeps $W_t$ at least as well sorted
as $Y_t$.  For a word $\eta$, let $s_k(\eta)$ count its high
letters in the first $k$ positions.  An update at edge $k$
changes only $s_k$, replacing it by
\[
 s_k'=
 \begin{cases}
 \min\{s_{k-1}+1,s_{k+1}\},&\text{for the $H$-first outcome},\\
 \max\{s_{k-1},s_{k+1}-1\},&\text{for the $L$-first outcome}.
 \end{cases}
\]
Both expressions are increasing in the neighboring prefix
counts, and the first is at least the second.  For a same-class
pair they coincide.  Thus a shared outcome preserves prefix
order, and the larger sorting probability in $W_t$ can only
increase its updated count.  Since $W_0=Y_0$, induction gives
$s_k(W_t)\ge s_k(Y_t)$ for every $k$ and $t$.

It follows that the $j$th high letter of $W_t$ is no farther
right than the $j$th high letter of $Y_t$.  If the high letters
of a word occupy positions $c_1<\cdots<c_h$, its inversion
count is $V=\sum_{j=1}^h(c_j-j)$.  Hence $V(W_t)\le V(Y_t)$,
and \cref{lem:exclusion} gives
\[
  (B^{n^4}J_{x,\theta})(\sigma)
  =\E V(W_{n^4})
  \le\E V(Y_{n^4})
 \le\frac14+\frac{\pi^2}{6\theta^2}.
\]

We now prove \eqref{eq:aux-comparison} via a canonical paths argument.  Let $\mathsf E_B$ and $\mathsf E_P$ be the sets of undirected
edges between distinct states with positive $B$- and $P$-transition
probabilities, respectively.  Given an exchange in $\mathsf E_B$, orient it so that it travels from its endpoint of smaller stationary weight to higher:
\[
 (\ell,g_1,\ldots,g_m,u)
 \longrightarrow
 (u,g_1,\ldots,g_m,\ell),
\]
where $\ell$ is low, $u$ is high, and all $g_j$ are intermediate labels.  We construct the path in $\mathsf E_P$ by $m+1$ consecutive calls to $u$ followed by calling $g_1,g_2,\ldots,g_m$ in order to put $\ell$ at the end.  Every requested label is at least intermediate and so has $p_i \geq e^x$.  Each move increases the weight, so each intermediate state has measure at least $\mu(\sigma)$.  Thus each original edge has conductance at least $e^x \mu(\sigma)$.  Each edge in the auxiliary chain has conductance at most $\mu(\sigma)$.

Each path has length at most $2m+1 \leq 2n$.
We claim that any edge in $\mathsf E_P$ can be used by at most $n$ paths in $\mathsf E_B$, as the original position $w$ of the high-label vertex is sufficient to learn the transition in $\mathsf E_B$.  If the called label is high, then we are in the process of switching this label with its nearest left non-intermediate neighbor, which will then be pushed back to index $w$.  If the called label is intermediate, then it is switching with a low label, which will fall back to index $w$, and the nearest left non-intermediate neighbor of the low vertex is the high vertex with which it is switching.

Thus by a standard canonical paths argument (see, e.g.~\cite{diaconis1993comparison})
\begin{align*}
 \Dir_B(v,v)
 &\le\sum_{\{\eta,\xi\}\in\mathsf E_P}
       \bigl(v(\xi)-v(\eta)\bigr)^2
       \sum_{\substack{e=\{\sigma,\tau\}\in\mathsf E_B\\
                       \{\eta,\xi\}\in\gamma_e}}
          |\gamma_e|\mu(\sigma)B(\sigma,\tau)\\
 &\le(2n)(n)e^{-x}
       \sum_{\{\eta,\xi\}\in\mathsf E_P}
          \mu(\eta)P(\eta,\xi)\bigl(v(\xi)-v(\eta)\bigr)^2\\
 &=2n^2e^{-x}\Dir_P(v,v).
\end{align*}
This proves \eqref{eq:aux-comparison}.
\end{proof}

We will apply this proposition to the kernel $B_{x,\theta(x)}$.

\section{A cost bound via entropy decay}

\subsection{From cost to Dirichlet forms}

We now handle $\nu(F)$ in terms of a Dirichlet form.  Set
\begin{equation}
 C_0=2+(1+\sqrt e)\left[\frac14+\frac{2\pi^2}{3}(1+8e)\right].
 \label{eq:absolute-constant}
\end{equation}
Only the fact that $C_0$ is an absolute constant will matter.

\begin{lemma}
\label{prop:quadratic}
For any strictly positive probability vector, $n\ge1024$, and any
probability measure $\nu$ on permutations with density $f=d\nu/d\mu$,
\[
 \nu(F)\le C_0+\frac32n^5\sqrt{\Dir_P(f,\log f)}.
\]
\end{lemma}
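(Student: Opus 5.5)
The plan is to start from \cref{lem:separated-reduction}, which gives $\nu(F)\le 2+(1+\sqrt e)\int_{-\infty}^0 e^x\,\nu(J_x)\,dx$. For each fixed $x<0$ we write $B=B_{x,\theta(x)}$ and $T=n^4$, and split
\[
 \nu(J_x)=\nu\bigl(B^{T}J_x\bigr)+\nu\bigl((I-B^{T})J_x\bigr).
\]
The first term is at most $\frac14+\frac{\pi^2}{6\theta(x)^2}$ by \eqref{eq:aux-relaxation}, since $\nu(B^TJ_x)$ is an average of $(B^TJ_x)(\sigma)$. Integrating against $e^x$ and using \eqref{eq:scale-integral} gives
\[
 (1+\sqrt e)\Bigl[\tfrac14+\tfrac{\pi^2}{6}(4+32e)\Bigr]=(1+\sqrt e)\Bigl[\tfrac14+\tfrac{2\pi^2}{3}(1+8e)\Bigr].
\]
Together with the $2$ from \cref{lem:separated-reduction}, this is exactly $C_0$. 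So everything reduces to showing that the error terms contribute at most $\frac32 n^5\sqrt{\Dir_P(f,\log f)}$. We may assume $\Dir_P(f,\log f)<\infty$, so $f$ vanishes on no edge with exactly one zero endpoint.

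For the error term, first telescope. Since $B$ is $\mu$-reversible,
\[
 \nu\bigl((I-B^T)J_x\bigr)=\sum_{k<T}\mu\bigl[f(I-B)B^kJ_x\bigr]=\sum_{k<T}\Dir_B\bigl(B^kJ_x,f\bigr).
\]
Each $u_k=B^kJ_x$ takes values in $[0,n^2/4]$, so every edge difference of $u_k$ is at most $n^2/4$. Hence
\[
 |\Dir_B(u_k,f)|\le \tfrac{n^2}{4}\cdot\tfrac12\sum_{\sigma,\tau}\mu(\sigma)B(\sigma,\tau)|f(\sigma)-f(\tau)|.
\]
Next we transfer this $L^1$ edge sum to $P$ using the canonical paths built in the proof of \cref{prop:auxiliary}. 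We bound $|f(\sigma)-f(\tau)|$ by the sum of $|\Delta f|$ along the path $\gamma_e$. The path length then drops out, because we are working in $L^1$ rather than $L^2$. The congestion is at most $n$, and the conductance ratio is at most $e^{-x}$. This should give
\[
 \Dir_B\text{-type $L^1$ sum}\le n e^{-x}\sum_{\{\eta,\xi\}\in\mathsf E_P}\mu(\eta)P(\eta,\xi)|f(\eta)-f(\xi)|.
\]
Finally we convert to $\Dir_P(f,\log f)$ via the logarithmic-mean inequality $(a-b)^2\le \frac{a+b}{2}(a-b)(\log a-\log b)$ and Cauchy--Schwarz. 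Using $\sum\mu(\eta)P(\eta,\xi)\frac{f(\eta)+f(\xi)}2\le 1$, we get
\[
 \sum\mu P|\Delta f|\le \sqrt{2\Dir_P(f,\log f)}.
\]
Altogether, $|\nu((I-B^T)J_x)|\lesssim T\cdot n^2\cdot n e^{-x}\sqrt{\Dir_P(f,\log f)}$.

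The main obstacle is that this bound carries a factor $e^{-x}$. It cancels the weight $e^x$ but leaves $\int dx$ over an unbounded range. So we split the $x$-integral at $x_0=-a\log n$. For $x<x_0$ we do not use the auxiliary chain at all: we use $\nu(J_x)\le n^2/4$, so that region contributes at most $\frac{n^2}{4}e^{x_0}$, which is $o(1)$ for $a>2$ and is absorbed into the slack in $C_0$. (If necessary, we would instead use the exact ``$\le$'' slack, or bound $\nu(J_x)$ there by $\nu(B^TJ_x)$ plus the error only for $x\ge x_0$.) On $[x_0,0]$ the error integral is at most $a\log n\cdot n^{7}\sqrt{\Dir_P}$ up to constants. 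That exceeds $n^5$, so the constant bookkeeping has to be tightened. I would first try to reduce $T$ by applying \eqref{eq:aux-relaxation} with a smaller $n$-power mixing time, and to reduce the sup bound on $\Delta u_k$: a single $B$-step changes $J$ by at most $n$, and $B^k$ is a contraction in the Lipschitz seminorm under the monotone coupling. I expect this exponent accounting, getting down to exactly $\frac32n^5$, to be the delicate part. The structural argument — telescope, $L^1$ path comparison, log-mean Cauchy--Schwarz, and truncation of small $x$ — is the plan.
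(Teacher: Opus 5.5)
Your first half matches the paper: the split $\nu(J_x)=\nu(K_xJ_x)+\nu((I-K_x)J_x)$ with $K_x=B_{x,\theta(x)}^{n^4}$, the bound via \eqref{eq:aux-relaxation} and \eqref{eq:scale-integral}, and the identification of $C_0$. The gap is in the error term, and it is structural rather than bookkeeping. You compare the $L^1$ edge sums, $\sum\mu B\,|\Delta f|\le ne^{-x}\sum\mu P\,|\Delta f|$, \emph{before} applying Cauchy--Schwarz, and the Cauchy--Schwarz step is then done on the $P$ side. As a result the conductance ratio $e^{-x}$ enters linearly, and so does the factor $T=n^4$ from telescoping. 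So $e^x$ times your per-$x$ error bound is of order $n^7\sqrt{\Dir_P(f,\log f)}$, constant in $x$, and $\int_{-\infty}^0 dx$ diverges.

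Truncation cannot repair this. A fixed cut at $x_0$ leaves an additive $(1+\sqrt e)\frac{n^2}{4}e^{x_0}$ that does not vanish as $\Dir_P(f,\log f)\to0$. Meanwhile $C_0$ has no room for it: it is exactly what the first term produces, and the $\frac14+\frac{\pi^2}{6\theta^2}$ you discard for $x<x_0$ can be far smaller than $n^2/4$ (e.g.\ when $\theta=1/2$). Alternatively, keep $\nu(K_xJ_x)$ for all $x$ and bound the error by $\min\{n^2/4,\,cn^7e^{-x}D\}$ with $D=\sqrt{\Dir_P(f,\log f)}$. Optimizing the cut then gives a bound of order $n^7D(1+\log\frac{1}{n^5D})$. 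This is not linear in $D$, so it can never yield $\frac32 n^5D$ as $D\to0$. Shrinking $T$ or the oscillation of $B^kJ_x$ only changes the polynomial prefactor.

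The missing idea is to apply Cauchy--Schwarz on the $K=K_x$ side, with $\sqrt f$, before any comparison.

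\begin{enumerate}
\item Write $\nu((I-K)J)=\Dir_K(f,J)$ and factor $f(z)-f(z')=(\sqrt{f(z)}-\sqrt{f(z')})(\sqrt{f(z)}+\sqrt{f(z')})$.
\item Using $|J(z)-J(z')|\le n^2/4$ and $\mu(f)=\mu(Kf)=1$, this gives $\nu((I-K)J)\le\frac{n^2}{2\sqrt2}\sqrt{\Dir_K(\sqrt f,\sqrt f)}$.
\item Next, $\Dir_K(\sqrt f,\sqrt f)=\sum_{k<n^4}\Dir_B(\sqrt f,B^k\sqrt f)\le n^4\Dir_B(\sqrt f,\sqrt f)$, by Cauchy--Schwarz for $\Dir_B$ and $\Dir_B(B^kg,B^kg)\le\Dir_B(g,g)$.
\item The quadratic comparison \eqref{eq:aux-comparison} then gives $\Dir_K(\sqrt f,\sqrt f)\le 2n^6e^{-x}\Dir_P(\sqrt f,\sqrt f)$.
\end{enumerate}

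Now both $T$ and $e^{-x}$ sit under the square root. The integrand becomes a multiple of $e^{x/2}$, and $\int_{-\infty}^0e^{x/2}\,dx=2$. The total error is $(1+\sqrt e)\,n^5\sqrt{\Dir_P(\sqrt f,\sqrt f)}\le\frac32n^5\sqrt{\Dir_P(f,\log f)}$, using $4\Dir_P(\sqrt f,\sqrt f)\le\Dir_P(f,\log f)$. No truncation is needed.
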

\begin{proof}
Averaging \cref{lem:separated-reduction} over $\nu$ gives
\begin{align*}
    \nu(F) & \leq 2 + (1+\sqrt e) \int_{-\infty}^0 e^x \nu(J_x) dx & \text{\cref{lem:separated-reduction}} \\
    & = 2 + (1+\sqrt e) \left[ \int_{-\infty}^0 e^x \nu((1 - K_x) J_x) dx + \int_{-\infty}^0 e^x \nu(K_x J_x) dx \right] & K_x := B_x^{n^4} \\
    & \leq 2 + (1+\sqrt e) \left[ \int_{-\infty}^0 e^x \nu((1 - K_x) J_x) dx + \int_{-\infty}^0 e^x \left(\frac14+\frac{\pi^2}{6\theta^2}\right) dx \right] & \text{\cref{prop:auxiliary} \eqref{eq:aux-relaxation}} \\
    & \leq 2 + (1+\sqrt e) \left[ \int_{-\infty}^0 e^x \nu((1 - K_x) J_x) dx + \frac14+\frac{2\pi^2}{3}(1+8e) \right] & \text{\cref{lem:scale}} \\
    & = C_0 + (1+\sqrt e) \int_{-\infty}^0 e^x \nu((1-K_x) J_x) dx.
\end{align*}

Fix $x$ and write $J=J_x$ and $K=K_x$.  This kernel is
$\mu$-reversible.  Since $d\nu=f\,d\mu$,
\[
 \nu((1-K)J)=\Dir_K(f,J).
\]
The low and high classes have respective sizes $\ell_x,h_x$ with
$\ell_x+h_x\le n$, so $0\le J\le\ell_xh_x\le n^2/4$.
Factoring the difference of squares and applying Cauchy--Schwarz gives
\begin{align*}
 \nu((1-K)J)
 &=\Dir_K(f,J) = \frac12 \sum_{z,z'} \mu(z) K(z,z') (f(z) - f(z')) (J(z) - J(z')) \\
 &\le\sqrt{\Dir_K(\sqrt f,\sqrt f)} \left[\frac12\sum_{z,z'}\mu(z)K(z,z')
       \bigl(\sqrt{f(z)}+\sqrt{f(z')}\bigr)^2
       (J(z)-J(z'))^2\right]^{1/2}\\
& \leq \sqrt{\Dir_K(\sqrt f,\sqrt f)} \left[ \frac12 \sum_{z,z'} \mu(z) K(z,z') 2(f(z) + f(z')) \left(\frac{n^2}{4}\right)^2 \right]^{1/2} \\
  &\le\frac{n^2}{2\sqrt2}\sqrt{\Dir_K(\sqrt f,\sqrt f)}.
\end{align*}

The final inequality follows from reversibility and stationarity of $K$, together with $\mu(f) = \nu(1) = 1$.

Let $B = B_x$.
Using Cauchy--Schwarz and \cref{prop:auxiliary} \eqref{eq:aux-comparison},
\begin{multline*}
    \Dir_K(\sqrt f, \sqrt f) = \sum_{k=0}^{n^4-1} \Dir_B(\sqrt f, B^k \sqrt f) \leq \sum_{k=0}^{n^4-1} \left( \Dir_B(\sqrt f, \sqrt f) \right)^{1/2} \left( \Dir_B( B^k \sqrt f, B^k \sqrt f ) \right)^{1/2} \\
    \leq n^4\Dir_B(\sqrt f, \sqrt f)
    \leq 2n^6e^{-x}\Dir_P(\sqrt f,\sqrt f).
\end{multline*}
Keeping the dependence on $x$ inside the comparison integral gives
\begin{align*}
 (1+\sqrt e)\int_{-\infty}^0e^x
       \nu((1-K_x)J_x)\,dx& \le\frac{1+\sqrt e}{2\sqrt2}n^2
          \sqrt{2n^6\Dir_P(\sqrt f,\sqrt f)}
          \int_{-\infty}^0e^{x/2}\,dx\\
 &\le\frac32n^5\sqrt{4\Dir_P(\sqrt f,\sqrt f)}.
\end{align*}
Here we used $(1+\sqrt e)/2\le3/2$.
Finally, Cauchy--Schwarz shows that $4\Dir_P(\sqrt f, \sqrt f) \leq \Dir_P(f,\log f)$ for any choice of $P,f$, completing the proof of the claim.
\end{proof}

\subsection{Entropy decay}

We begin by recording a very general argument on entropic decay that holds for arbitrary reversible Markov chains on finite state spaces.  This may be thought of as a discrete-time analogue of \cite[Lemma III.1]{liu2024locally}.

For measures $\alpha \ll \beta$ on the same finite space, let
\[
 \KL(\alpha\Vert\beta)
 =\sum_z\alpha(z)\log\frac{\alpha(z)}{\beta(z)}.
\]

\begin{lemma}
\label{lem:entropy}
Let $P$ be a reversible transition matrix on a finite space with
a strictly positive stationary measure $\mu$, and put
$H=\log(1/\mu_{\min})$.  For any initial law $\nu_0$, let
\[
 \nu_s=\nu_0P^s,\qquad f_s=\frac{d\nu_s}{d\mu},\qquad
 g_t=\frac{1+f_t+f_{t+1}}3.
\]
For every integer $t\ge0$,
\begin{equation}
 \Dir_P(g_t,\log g_t)
 \le\frac{H+\log3}{3}\sqrt{\frac{H}{2(t+1)}}.
 \label{eq:entropy-smoothing}
\end{equation}
\end{lemma}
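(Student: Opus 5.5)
The plan is to write the Dirichlet form of $g_t$ as a difference of expectations of the bounded function $\log g_t$ under $\nu_t$ and $\nu_{t+2}$. I then bound $\|\nu_t-\nu_{t+2}\|_{\TV}$ by the square root of the one-step entropy decrement, and finally average over times $s\le t$ using monotonicity.

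\emph{Step 1 (reduction to a total variation distance).} Put $\varphi=\log g_t$ and $D_s=\KL(\nu_s\Vert\mu)$. By reversibility, $f_{s+1}=Pf_s$, and $\mu[f_s\,(I-P)\varphi]=\nu_s(\varphi)-\nu_{s+1}(\varphi)$. The constant $1$ is annihilated by $I-P$. Hence
\[
 \Dir_P(g_t,\log g_t)=\mu[g_t(I-P)\varphi]
 =\tfrac13\bigl(\nu_t(\varphi)-\nu_{t+2}(\varphi)\bigr).
\]
Since $\mu(z)f_s(z)=\nu_s(z)\le1$, we have $0\le f_s\le 1/\mu_{\min}$. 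Therefore $1/3\le g_t\le 1/\mu_{\min}$, so $\varphi$ takes values in an interval of length at most $H+\log3$. This gives
\[
 \Dir_P(g_t,\log g_t)\le \frac{H+\log 3}{3}\,\|\nu_t-\nu_{t+2}\|_{\TV}.
\]

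\emph{Step 2 (the key inequality).} I would show that $\|\nu_s-\nu_{s+2}\|_{\TV}\le\sqrt{(D_s-D_{s+1})/2}$ for every $s$. Let $\pi(z,z')=\mu(z)P(z,z')$, which is symmetric by reversibility. Define $Q$ to have density $f_s(z)$ with respect to $\pi$; this is the law of $(Z_s,Z_{s+1})$. Define $R$ to have density $f_{s+1}(z')$ with respect to $\pi$, which is a probability measure by symmetry of $\pi$. The first marginal of $Q$ is $\nu_s$. The first marginal of $R$ is $\mu\cdot Pf_{s+1}=\nu_{s+2}$. Moreover,
\[
 \KL(Q\Vert R)=\E_Q\Bigl[\log\tfrac{f_s(Z)}{f_{s+1}(Z')}\Bigr]=D_s-D_{s+1},
\]
because the second marginal of $Q$ is $\nu_{s+1}$. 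The claim then follows from the data processing inequality for total variation (projecting to the first coordinate) together with Pinsker's inequality. Edge cases with zero densities are harmless, since $Q\ll R$ follows from $f_{s+1}=Pf_s$ being positive wherever $\pi$ charges an edge out of the support of $f_s$. This step is where I expect the real content to lie: the point is to find the pair $(Q,R)$ whose relative entropy is exactly the decrement.

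\emph{Step 3 (averaging).} The distance $\|\nu_s-\nu_{s+2}\|_{\TV}=\|(\nu_0-\nu_2)P^s\|_{\TV}$ is nonincreasing in $s$, since $P$ contracts total variation. The decrements are nonnegative and telescope:
\[
 \sum_{s=0}^{t}(D_s-D_{s+1})\le D_0\le H.
\]
Hence some $s\le t$ has $D_s-D_{s+1}\le H/(t+1)$. For that $s$,
\[
 \|\nu_t-\nu_{t+2}\|_{\TV}\le\|\nu_s-\nu_{s+2}\|_{\TV}\le\sqrt{\frac{H}{2(t+1)}}.
\]
Combining this with Step 1 yields \eqref{eq:entropy-smoothing}.
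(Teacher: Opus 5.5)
Your proposal is correct and follows the paper's proof essentially step for step. Your pair $(Q,R)$ is exactly the paper's $(A_s,B_s)$, with $\KL(Q\Vert R)=D_s-D_{s+1}$. From there you use the same ingredients: Pinsker, marginal contraction, monotonicity of $\norm{\nu_s-\nu_{s+2}}_{\TV}$, and the identity $\Dir_P(g_t,\log g_t)=\frac13\bigl(\nu_t(\log g_t)-\nu_{t+2}(\log g_t)\bigr)$ with $\log g_t$ confined to an interval of length $H+\log 3$. The only cosmetic difference is that you pick a good $s\le t$ by pigeonhole on the entropy decrements, whereas the paper sums the squared total-variation bounds over $s=0,\dots,t$; both give the same constant.
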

\begin{proof}
Write $H_s=\KL(\nu_s\Vert\mu)$.  Reversibility gives
$f_{s+1}=Pf_s$.  The log-sum inequality shows that relative entropy
decreases under application of a transition matrix.
In particular $0\le H_s\le H_0\le H$.
Consider the joint laws
\[
 A_s(x,y)=\nu_s(x)P(x,y),\qquad
 B_s(x,y)=\nu_{s+1}(y)P(y,x).
\]
If $A_s(x,y)>0$, then $\nu_{s+1}(y)>0$ and, by reversibility,
$P(y,x)>0$, so $B_s(x,y)>0$ as well.
Their likelihood ratio, wherever $A_s(x,y)>0$, is
$f_s(x)/f_{s+1}(y)$.  Summing the logarithm of this ratio gives
\begin{equation}
 \KL(A_s\Vert B_s)=H_s-H_{s+1}.
 \label{eq:edge-entropy}
\end{equation}
Recall Pinsker's inequality
$\norm{\alpha-\beta}_{\TV}^2\le\KL(\alpha\Vert\beta)/2$.
The first marginals of $A_s,B_s$ are $\nu_s,\nu_{s+2}$.
Pinsker's inequality, \eqref{eq:edge-entropy}, and contraction
under taking a marginal give
\[
 \norm{\nu_s-\nu_{s+2}}_{\TV}^2
 \le\frac{H_s-H_{s+1}}2.
\]
The distance on the left is nonincreasing in $s$, by contraction
under $P$.  Summing from $s=0$ through $t$ and using $H_0\le H$ gives
\[
 \norm{\nu_t-\nu_{t+2}}_{\TV}
 \le\sqrt{\frac{H}{2(t+1)}}.
\]

For the Dirichlet form, note that $\mu(g_t)=1$ and
\[
 \frac13\le g_t\le e^H,\qquad
 (I-P)g_t=\frac{f_t-f_{t+2}}3.
\]
Thus the values of $\log g_t$ lie in an interval of length at most
$H+\log3$.  The signed measure $\nu_t-\nu_{t+2}$ has total mass
zero, and its positive and negative parts each have mass
$\norm{\nu_t-\nu_{t+2}}_{\TV}$.
By reversibility,
\begin{align*}
 \Dir_P(g_t,\log g_t)
 &=\mu[(I-P)g_t\,\log g_t]
 =\frac13\mu[(f_t-f_{t+2})\log g_t]\\
 &\le\frac{1}{3}\norm{\nu_t-\nu_{t+2}}_{\TV} (\max(\log g_t) - \min(\log g_t)) \\
 &\le\frac{H+\log3}{3}\sqrt{\frac{H}{2(t+1)}}.\qedhere
\end{align*}
\end{proof}

\subsection{Extremely small and zero probabilities}
\label{sec:entropy}

For the transposition chain with $n\ge1024$, $F\ge0$ and $f_t\le3g_t$.
Applying \cref{prop:quadratic} to the law with strictly positive
density $g_t$ and using \cref{lem:entropy} therefore gives
\begin{align}
 \nu_t(F)
 &\le3\mu(g_tF)
 \le3C_0+\frac92n^5\sqrt{\Dir_P(g_t,\log g_t)}\nonumber\\
 &\le3C_0+\frac{3\sqrt3}{2}n^5\sqrt{H+\log3}
       \left(\frac{H}{2(t+1)}\right)^{1/4}.
 \label{eq:cost-at-time}
\end{align}

The preceding arguments apply to every positive vector, but
$H = \log(1/\mu_{\min})$ need not be bounded in terms of $n$ alone (and indeed may not be finite, as we allow $p_i = 0$).
We perturb the request probabilities by exponentially small amounts
and couple the resulting chain to the original chain up to time
$T=n^{29}$.

\begin{lemma}
\label{lem:regularization}
Let $p_1\ge\cdots\ge p_n\ge0$ be a probability vector on $n\ge2$
labels.  Set
\[
 q_i=\frac{p_i+e^{-n}}{1+ne^{-n}}\quad(1\le i\le n).
\]
Then $q$ is a positive probability vector with the same weak ordering
as $p$.  Start the $p$-chain $(X_t)$ and the $q$-chain $(Y_t)$ from
the same permutation.  For every integer $t\ge0$,
\begin{equation}
 \E_p F_p(X_t)
 \le \E_qF_q(Y_t)+(t+1)n^2e^{-n}.
 \label{eq:regularization-transfer}
\end{equation}
If $\mu_q$ is the stationary measure of the $q$-chain and
$H_q=\log(1/\mu_{q,\min})$, then
\begin{equation}
 H_q\le\frac{n^3}{2}.
 \label{eq:regularized-entropy}
\end{equation}
\end{lemma}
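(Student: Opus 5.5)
The plan has three parts: the elementary properties of $q$, a step-by-step coupling for \eqref{eq:regularization-transfer}, and a direct estimate of $\mu_q$ for \eqref{eq:regularized-entropy}.

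\emph{Properties of $q$.} The map $p_i\mapsto (p_i+e^{-n})/(1+ne^{-n})$ is strictly increasing and affine. Hence $q$ has the same weak ordering as $p$, satisfies $\sum_i q_i=1$, and has $q_i\ge e^{-n}/(1+ne^{-n})>0$. We also get
\[
|q_i-p_i|=\frac{|e^{-n}-np_ie^{-n}|}{1+ne^{-n}}\le ne^{-n},
\]
and therefore $\norm{p-q}_{\TV}\le \tfrac12\sum_i|p_i-q_i|\le n^2e^{-n}/2$. A cruder bound of order $n^2e^{-n}$ is all that will be needed.

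\emph{Proof of \eqref{eq:regularization-transfer}.} Couple the two chains step by step. At each time, draw the requested labels $I\sim p$ and $I'\sim q$ under a maximal coupling, so that $\Prb[I\neq I']=\norm{p-q}_{\TV}$. Both chains apply the transposition rule to their own requested label. As long as the requests agree, the permutations stay equal. A union bound then gives $\Prb[X_t\neq Y_t]\le t\norm{p-q}_{\TV}$.

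Next compare the costs. Because the weak orderings coincide, we can use the same labelling for both, and from \eqref{eq:cost} we have $F_p(\sigma)=\Cost_p(\sigma)-\OPT_p$. Then
\[
|F_p(\sigma)-F_q(\sigma)|\le\sum_i|p_i-q_i|\,|\sigma(i)-i|\le n\cdot n\cdot ne^{-n}.
\]
This is too crude, so sharpen it via the inversion form: $|(p_i-p_j)-(q_i-q_j)|=|p_i-p_j|\,\tfrac{ne^{-n}}{1+ne^{-n}}$, so $F_q=F_p/(1+ne^{-n})$ exactly. Hence $F_p=(1+ne^{-n})F_q\le F_q+ne^{-n}\cdot n^2/2$.

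Combining the two pieces, we get
\[
\E F_p(X_t)\le \E F_p(Y_t)+\tfrac{n^2}{2}\Prb[X_t\ne Y_t]\le \E F_q(Y_t)+\tfrac{n^3}{2}e^{-n}+\tfrac{n^2}{2}\cdot t\cdot\tfrac{n^2e^{-n}}{2}.
\]
Here I used $F_p\le n^2/2$, since $F_p\le\max_\sigma \Cost_p\le n$ and, more usefully, $F_p\le n$. With $F_p\le n$, the difference term is at most $n\cdot t\cdot n^2e^{-n}/2$, which does not fit $(t+1)n^2e^{-n}$.

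The main obstacle is therefore bookkeeping the constants to land exactly on $(t+1)n^2e^{-n}$. Two refinements fix it. First, use $F_p\le \Cost_p-\OPT_p\le n-1$. Second, bound $\norm{p-q}_{\TV}$ more carefully:
\[
\tfrac12\sum_i|p_i-q_i|=\frac{\tfrac12\sum_i|e^{-n}-np_ie^{-n}|}{1+ne^{-n}}\le \frac{\tfrac12 e^{-n}(n+n)}{1}=ne^{-n},
\]
using $\sum_i|1-np_i|\le n+n\sum_i p_i=2n$. Then the coupling term is at most $(n-1)\,t\,ne^{-n}\le tn^2e^{-n}$, and the rescaling term $ne^{-n}F_q\le n^2e^{-n}$ accounts for the extra $+1$ in $(t+1)$.

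\emph{Proof of \eqref{eq:regularized-entropy}.} Use \eqref{eq:def-of-mu} for $q$. With $Z\ge \prod_i q_i^{\,n-i}$ replaced simply by $Z\le n!\cdot 1$, this gives
\[
\mu_q(\sigma)\ge\frac{1}{n!}\prod_i q_i^{\,n-\sigma(i)}\ge \frac{1}{n!}\Bigl(\frac{e^{-n}}{1+ne^{-n}}\Bigr)^{n(n-1)/2}.
\]
Taking logs,
\[
H_q\le \log n!+\tfrac{n(n-1)}{2}\bigl(n+\log(1+ne^{-n})\bigr)\le n\log n+\tfrac{n^2(n-1)}{2}+\tfrac{n^2}{2}\cdot ne^{-n}.
\]
This is at most $n^3/2$ once $n\log n+n^3e^{-n}/2\le n^2/2$, which holds for $n\ge 2$ (check small $n$ directly; at $n=2$ the left side is about $1.92$ against $2$). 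For a cleaner margin one can bound $Z$ more carefully: since $q_i<1$, each term of $Z$ satisfies $\prod_i q_i^{\,n-\sigma(i)}\le 1$.
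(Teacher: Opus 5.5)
Your proposal is correct and follows the paper's proof. The ingredients match: the exact identity $F_p=(1+ne^{-n})F_q$ from the inversion form; a maximal coupling of requests with $\norm{p-q}_{\TV}\le ne^{-n}$, which you compute directly while the paper reads it off from $q=(1-\alpha)p+\alpha v$ with $v$ uniform; $0\le F\le n$; and $Z\le n!$ with $q_{\min}\ge e^{-n}/(1+ne^{-n})$ for $H_q$. In a final write-up, drop the false starts, and note that $q_i\le 1$ is exactly what justifies $Z\le n!$ rather than being an optional refinement.
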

\begin{proof}
The entries of $q$ are positive, sum to $1$, and preserve the weak
ordering of the labels.  Since
$q_i-q_j=(p_i-p_j)/(1+ne^{-n})$, the excess-cost identity
\eqref{eq:cost} gives
\[
 F_p(\sigma)=(1+ne^{-n})F_q(\sigma)
 \qquad\text{for every permutation }\sigma.
\]

Let $v$ be the uniform distribution on labels.  Then
\[
 q=(1-\alpha)p+\alpha v,\qquad
 \alpha=\frac{ne^{-n}}{1+ne^{-n}},
\]
so $\norm{p-q}_{\TV}\le\alpha\le ne^{-n}$.  Maximally couple the
requests independently at each step.  The lists agree until the
first request mismatch, whose probability by time $t$ is at most
$tne^{-n}$.  Since $0\le F_p\le n$, the coupling gives
\[
 \E_pF_p(X_t)
 \le\E_qF_p(Y_t)+tn^2e^{-n}
 =(1+ne^{-n})\E_qF_q(Y_t)+tn^2e^{-n} \leq \E_q F_q(Y_t) + (t+1)n^2 e^{-n},
\]
proving \eqref{eq:regularization-transfer}.

Finally, the exponents in the stationary weight
\eqref{eq:def-of-mu} sum to $n(n-1)/2$.  Every entry of $q$ is at
least $e^{-n}/(1+ne^{-n})$, and there are $n!$ permutations, so
\begin{align*}
 H_q&\le\log(n!)+\frac{n(n-1)}2
       \bigl(n+\log(1+ne^{-n})\bigr)\le\frac{n^3}{2}. \qedhere
\end{align*}
\end{proof}

\begin{proof}[Proof of \cref{thm:main}]
We prove the theorem with $C=1250$.
If $n<1024$, use $F_p\le n<C$.
Assume $n\ge1024$ and set $T=n^{29}$.
Given $p$, define $q$ as in \cref{lem:regularization}, and start
the $q$-chain $(Y_t)$ from the same permutation as $(X_t)$.
We have
\begin{align*}
 \E_pF_p(X_T)
 &\le\E_qF_q(Y_T)+(T+1)n^2e^{-n}
 &&\text{by \eqref{eq:regularization-transfer}}\\
 &\le3C_0+\frac{3\sqrt3}{2}n^5\sqrt{H_q+\log3}
       \left(\frac{H_q}{2(T+1)}\right)^{1/4} +(T+1)n^2e^{-n}
 &&\text{by \eqref{eq:cost-at-time}}\\
 &\le3C_0+\frac{3\sqrt3}{2}n^5\sqrt{n^3/2+\log3}
       \left(\frac{n^3}{4(T+1)}\right)^{1/4} +(T+1)n^2e^{-n}
 &&\text{by \eqref{eq:regularized-entropy}}\\
 &\le3C_0+\frac{3\sqrt6}{4}+2n^{31}e^{-n}
 &&T=n^{29}\\
 &<3C_0+3<C.
\end{align*}

See \eqref{eq:absolute-constant} for the exact definition of $C_0$ to verify $C = 1250$ suffices.  For every integer $t\ge T$,
the Markov property therefore gives
\[
 \sup_\sigma P_p^tF_p(\sigma)
 \le\sup_\sigma P_p^TF_p(\sigma)<1250. \qedhere
\]
\end{proof}

\bibliographystyle{amsplain0}
\bibliography{main}

\end{document}